  \providecommand\BibTeX{{%
    \normalfont B\kern-0.5em{\scshape i\kern-0.25em b}\kern-0.8em\TeX}}}
\tikzset{>=Latex}    
\theoremstyle{acmplain}
\newtheorem{theorem}{Theorem}[section]
\newtheorem{proposition}[theorem]{Proposition}
\newtheorem{lemma}[theorem]{Lemma}
\newtheorem{corollary}[theorem]{Corollary}
\newtheorem{remark}[theorem]{Remark}
\theoremstyle{acmdefinition}
\newtheorem{definition}[theorem]{Definition}
\theoremstyle{remark}
\newcommand{\HoareTriple}[3]{\ensuremath{\{#1\}\,#2\,\{#3\}}}
\newcommand{\IncTriple}[3]{\ensuremath{[#1]\,#2\,[#3]}}
\newcommand{\SFHTriple}[3]{\ensuremath{\langle#1\rangle\,#2\,\langle#3\rangle}}
\newcommand{\SPCo}[1]{\ensuremath{P^s_{#1}}}
\newcommand{\SPC}[2]{\ensuremath{\SPCo{#1}(#2)}}
\newcommand{\ahl}{aHl\xspace}
\newcommand{\executes}[3]{\ensuremath{#2\stackrel{#1}{\longmapsto}#3}}
\newcommand{\limp}{\rightarrow}
\newcommand{\lpmi}{\leftarrow}
\newcommand{\Iff}{\Leftrightarrow}
\newcommand{\codecommand}[1]{\ensuremath{\mathtt{#1}}\xspace}
\newcommand{\ccskip}{\codecommand{skip}}
\newcommand{\ccif}{\codecommand{if}}
\newcommand{\ccthen}{\codecommand{then}}
\newcommand{\ccelse}{\codecommand{else}}
\newcommand{\ccwhile}{\codecommand{while}}
\newcommand{\ccdo}{\codecommand{do}}
\newcommand{\ccifte}[3]{\ccif\ #1\ \ccthen\ #2\ \ccelse\ #3}
\newcommand{\ccwhiledo}[2]{\ccwhile\ #1\ \ccdo\ #2}
\newcommand{\lh}{\codecommand{lh}}
\newcommand{\el}{\codecommand{el}}
\newcommand{\ttL}{\codecommand{L}}
\newcommand{\ttp}{\codecommand{p}}
\newcommand{\tti}{\codecommand{i}}
\newcommand{\ttacc}{\codecommand{acc}}
\newcommand{\ttfalse}{\codecommand{false}}
\newcommand{\tttrue}{\codecommand{true}}
\renewcommand{\labelenumi}{(\alph{enumi})}
\renewcommand{\theenumi}{(\alph{enumi})}
\newcommand{\sPK}{\codecommand{scriptPubKey}}
\newcommand{\sS}{\codecommand{scriptSig}}
\newcommand{\pKH}{\codecommand{P2PKH}}
\newcommand{\sPKpKH}{\ensuremath{\sPK_\pKH}\xspace}
\newcommand{\sSpKH}{\ensuremath{\sS_\pKH}\xspace}
\begin{document}

\title{Access Hoare Logic}

\author{Arnold Beckmann}
\authornote{Corresponding author}
\email{a.beckmann@swansea.ac.uk}
\orcid{0000-0001-7958-5790}
\author{Anton Setzer}
\email{a.g.setzer@swansea.ac.uk}
\orcid{0000-0001-5322-6060}
\affiliation{%
  \institution{Dept.~of Computer Science, Swansea University}
  \city{Swansea SA1 8EN}
  \state{Wales}
  \country{UK}
}


\begin{abstract}
  Following Hoare's seminal invention, now called \emph{Hoare logic},
  to reason about
  \emph{correctness of computer programs},
  we advocate a related but fundamentally different approach to reason about
  \emph{access security of computer programs} such as access control.
  We define the formalism, which we denote \emph{access Hoare logic},
  and present
  examples which demonstrate its usefulness
  and fundamental difference to Hoare logic.
  We prove soundness and completeness of access Hoare logic,
  and provide a link between access Hoare logic and standard Hoare logic.
  We also demonstrate a fundamental difference of access Hoare logic to other approaches, in particular incorrectness logic.
\end{abstract}

\begin{CCSXML}
<ccs2012>
   <concept>
       <concept_id>10003752.10003790.10011741</concept_id>
       <concept_desc>Theory of computation~Hoare logic</concept_desc>
       <concept_significance>500</concept_significance>
   </concept>
   <concept>
       <concept_id>10003752.10003790.10002990</concept_id>
       <concept_desc>Theory of computation~Logic and verification</concept_desc>
       <concept_significance>300</concept_significance>
   </concept>
   <concept>
       <concept_id>10011007.10010940.10010992.10010998.10010999</concept_id>
       <concept_desc>Software and its engineering~Software verification</concept_desc>
       <concept_significance>300</concept_significance>
   </concept>
   <concept>
       <concept_id>10002978.10002986.10002990</concept_id>
       <concept_desc>Security and privacy~Logic and verification</concept_desc>
       <concept_significance>300</concept_significance>
   </concept>
   <concept>
       <concept_id>10002978.10003022.10003023</concept_id>
       <concept_desc>Security and privacy~Software security engineering</concept_desc>
       <concept_significance>300</concept_significance>
   </concept>
 </ccs2012>
\end{CCSXML}

\ccsdesc[500]{Theory of computation~Hoare logic}
\ccsdesc[300]{Theory of computation~Logic and verification}
\ccsdesc[300]{Software and its engineering~Software verification}
\ccsdesc[300]{Security and privacy~Logic and verification}
\ccsdesc[300]{Security and privacy~Software security engineering}

\keywords{Access control, blockchain, Hoare triple, access Hoare logic, Bitcoin, smart contracts}

\maketitle
\section{Introduction}


The correctness of programs has been a
concern early on in the development of computers.
The seminal contribution by Tony Hoare~\cite{hoare:69}
allows to reason about programs instruction by instruction
using pre- and post-conditions.
It allows to ascertain that if the pre-condition is satisfied
before execution, and program execution terminates,
then the post-condition will be satisfied.

This form of ascertaining the correctness of programs has been and still is
very successful.
It is at the heart of several tools that are used in academia and industry
to specify and verify software systems, such as SPARK~\cite{SPARKAda:Homepage}
and Dafny~\cite{dafny:Homepage}.

However, there are security properties that do not naturally
fall within the framework of Hoare logic.
Access control is the property of restricting access to a resource.
It is a well-established area of computer security that includes
identification,
authorization, authentication, access approval, and
auditing~\cite{gollmann:2011},
where access approval is the process of deciding
whether to grant or reject an access request.
Access control has a broad range of applications, ranging from large
distributed systems to local systems such as electronic keys.
More recently, it gained importance in the context of blockchain and
cryptocurrencies:
Smart contracts are often written to control access
to digital assets, such as cryptocurrencies, tokens,
or other forms of digital assets managed by blockchain systems,
thereby combining several properties of general access control.
We will use the term \emph{access security} to speak about such properties in general.

For a program
to be access secure,
we need to guarantee that the final state
(e.g., access to a hotel room via electronic keys)
will only be granted
if a given set of conditions has been met before execution
(e.g., possession of valid access codes).
That is, for access
security
we are interested in the \emph{reverse} direction
of Hoare logic, which we will denote \emph{access Hoare logic (\ahl)},
where we will
reason
from post-conditions to pre-conditions.
While normal Hoare logic expresses a pre-condition to be \emph{sufficient} for
the post-condition to hold after executing a program,
access Hoare logic requires the pre-condition to be \emph{necessary}.
More precisely, access Hoare logic expresses that
if the program starting in state $s$ terminates in a state $s'$ which
fulfills the post-condition,
then it is necessary that state $s$ fulfills the pre-condition.
To make the difference between access Hoare logic and standard Hoare logic more apparent, we observe that
while \emph{False} is always a valid pre-condition in normal Hoare logic,
in case of access Hoare logic this is given by \emph{True}.

In this note, we define
access Hoare triples and access Hoare logic, and use it
to describe
access security for examples.
We show that access Hoare logic is sound and complete for access Hoare triples 
under a general, semantical interpretation.
We provide a link between Hoare logic and access Hoare logic in that
weakest pre-conditions for usual Hoare logic correspond to strongest
pre-conditions for access Hoare logic on states where the program
terminates (see Section~\ref{sec:precondition}).
We also demonstrate a fundamental difference of access Hoare logic to incorrectness logic \cite{OHearn:IncorrectnessLogic} and related approaches: 
while incorrectness logic provides a complementary view to standard Hoare logic in the context of forward transformers,
access Hoare logic plays a similar role in the context of backwards transformers (see Section~\ref{sec:comparisons}).

Alhabadi et al.~\cite{alhabadi:etal:2022} have recently discovered
that weakest pre-conditions for Hoare logic 
can be used to express access security for Bitcoin scripts.
This paper expands on those insights and provides a reason why that
is the case.

\section{Examples}\label{sec:examples}

We demonstrate our point of view by considering three examples: 
The first involves electronic keys, 
the second owning and transferring Bitcoins,
and the third granting access depending on the presence of a key in a
list of keys.
We will revisit the examples in Section~\ref{sec:examplesFormal} to formally analyze their access security after introducing our framework.

\subsection{Access Security for Electronic Keys}\label{ex:HotelKeys}
\newcommand{\deviceKey}{\codecommand{dk}}
\newcommand{\cardKeyOne}{\codecommand{ck1}}
\newcommand{\cardKeyTwo}{\codecommand{ck2}}
\newcommand{\access}{\codecommand{acc}}
\newcommand{\cctrue}{\codecommand{true}}
\newcommand{\ccfalse}{\codecommand{false}}
\newcommand{\keyCurrentGuest}{\codecommand{keyCurrentGuest}}

Our first example is a digital key system in a hotel that uses cards to open doors,
which was specified and verified  by
Jackson~\cite{jackson2012:softwareAbstraction:2ndEd}
using \emph{Alloy},
and by Nipkow~\cite{Nipkow-VerficationHotelKeySystem} using
\emph{Isabelle/HOL}.
In this system, the battery-powered door locks are not connected to a
network.
Instead, their state is controlled by cards that are inserted
into them, or, in wireless systems, moved close to them.
A lock holds an internal electronic key \deviceKey
which controls its state.
A card holds two electronic keys,
\cardKeyOne for the first key and \cardKeyTwo for the second.
When a new guest receives a card (for example, from the hotel reception),
it will contain the key corresponding to a previous guest as
\cardKeyOne, and a new fresh key as \cardKeyTwo.
An access control program for the door lock suitable for use in such an
electronic hotel key system
should give access to the room in an access-secure way, by setting
a variable \access to true:
    When the guest uses their card for the first time,
access will be granted if the previous guest's
key is stored as the door key
($\deviceKey{} = \cardKeyOne{}$),
in which case the door key is updated to the new key \cardKeyTwo{},
preventing the previous guest from gaining access to the room
thereafter.
The next time the new guest swipes their card,
access will be granted because the second key on their card,
\cardKeyTwo{}, now matches the door key ($\deviceKey{} =  \cardKeyTwo{}$).

For this example, a pre-condition for verifying access security is given by
$\cardKeyOne = \deviceKey \lor \cardKeyTwo = \deviceKey$, expressing that
either of the first key or the second key is the door key. A post-condition is
$\access = \cctrue$,
expressing that access is granted.

A program to solve this task may look as follows:
\begin{verbatim}
P0:  if not (dk == ck1)
       then acc := (dk == ck2)
       else dk := ck2; acc := true
\end{verbatim}
However, there is an ambiguity in the way this code is written, in that it
can be
parsed in different ways.
The intended way is that both instructions after \verb|else| would be
executed in the else part, i.e. the program would be parsed as
\begin{verbatim}
P1:  if not (dk == ck1)
       then {acc := (dk == ck2)}
       else {dk := ck2; acc := true}
\end{verbatim}
Another reading would give higher priority to '\verb|;|', leading to
\begin{verbatim}
P2:  {if not (dk == ck1)
       then {acc := (dk == ck2)}
       else {dk := ck2}};
     acc := true
\end{verbatim}

Both \verb|P1| and \verb|P2| are correct w.r.t.~%
the given pre- and post-condition in standard Hoare logic.
However, only \verb|P1| is access secure w.r.t.~the
given pre- and post-condition,
while \verb|P2| is not:
the given pre-condition is not necessary for the post-condition
to hold after executing \verb|P2|,
as \verb|P2| sets \verb|acc| to \verb|true| in any case.
See Section~\ref{sec:exampleHKformal} for further details.

\subsection{Access Security for Bitcoin}\label{ex:Bitcoin}

%

The cryptocurrency Bitcoin is managed by a blockchain that stores
transactions of Bitcoins on a ledger.
The transfer of Bitcoins from a current owner, Bob, 
to a recipient, 
Alice, 
via transactions
is controlled by small programs called scripts:
Bob's Bitcoins 
are protected by a \emph{locking script}, while the new transaction that
tries to transfer the Bitcoins to 
Alice
contains an \emph{unlocking script}.
To transfer the Bitcoins to 
Alice,
the unlocking script, followed
by the locking script, are executed;
if this execution succeeds (by producing the output \emph{true}),
the transaction is successful and the Bitcoins are transferred to 
Alice;
otherwise, the transaction fails and the Bitcoins will not be
transferred.

In this context, the focus is on the access security of the locking
script:
In any successful run of the unlocking and then the locking script,
the unlocking script is required to finish in a state which fulfills
a suitable pre-condition, see Fig.~\ref{fig-executing-lock-unlock-scripts}.
For a typical locking script
like the locking script $\sPK_\pKH$
of the standard pay-to-public-key-hash (P2PKH) script
\cite{bitcoin:wiki},
a typical pre-condition
would express that an address and a signature are
provided, where the address hashes to a value stored in the locking
script, and the signature matches the address.
Further details are given in Section~\ref{sec:exampleBitcoinformal}.

\begin{figure*}[ht]
\centering
\begin{tikzpicture}
  \node[align=center] at (3,2.5) {direction of execution};
  \draw[->] (0,2) -- (6,2);
  \node[draw, minimum width=4cm, minimum height=10mm, align=center]
  (box1) at (0,1) {unlocking script};
  \node[draw, minimum width=4cm, minimum height=10mm, align=center]
  (box2) at (6,1) {locking script};
  \draw[-] (1.5,0) -- (3,0.7);
  \draw[-] (4.5,0) -- (3,0.7);
  \node[align=center] at (3,0) {pre-condition};
  \draw[-] (7.5,0) -- (9,0.7);
  \draw[-] (10.5,0) -- (9,0.7);
  \node[align=center] at (9,0) {post-condition};
\end{tikzpicture}
  \caption{Executing the combined unlocking and locking script}\label{fig-executing-lock-unlock-scripts}
\end{figure*}

\subsection{Access Security for While Loops}\label{ex:CheckList}

We consider a program to
grant access by setting a variable \verb|acc| to true
if a given passkey \verb|p| is amongst a list of stored passkeys
\verb|L|,
where \verb|lh(L)| denotes the length of list \verb|L|,
and \verb|el(i,L)| denotes the \verb|i|-th element of list~\verb|L|.
The output of this program (i.e.~the variable $\ttacc$) may then be
used by another system to e.g.~open a door.

\begin{verbatim}
  inputs:  p : Keys
           L : List(Keys)
  outputs: acc : Boolean

  program CheckList
    i := 1;
    acc := false;
    while i<=lh(L) do
      if el(i,L) == p then acc := true endif;
      i := i+1;
    endwhile
\end{verbatim}

Suitable pre- and post-conditions for certifying the access security of
this program should reflect the requirement that access should only be
granted if a given passkey $\ttp$ is included in a list of stored passkeys~$\ttL$.
A suitable post-condition is given by $\ttacc=\tttrue$.
A suitable pre-condition for when to give access
is that $\ttp$ occurs in $\ttL$, which may be expressed as
$\exists j{\le}\lh(\ttL) ( \el(j,\ttL) = \ttp)$.
A suitable invariant for the while loop that reflects the intended
behaviour of the program may be
$\ttacc=\tttrue\ \lor\
\exists j{\le}\lh(\ttL) ( j\ge i \land \el(j,\ttL) = \ttp)$
where $i$ denotes the current index of the while loop.
The above loop invariant reflects the intention that within the while
loop, to reach the post-condition it is necessary that either the
post-condition has been reached already ($\ttacc=\tttrue$),
or that a matching passkey in the list $\ttL$ will be found in the
remaining steps of the loop
($\exists j{\le}\lh(\ttL) ( j\ge i \land \el(j,\ttL) = \ttp)$).
A detailed account of the access security of \verb|CheckList| is given in Section~\ref{sec:exampleCLformal}.



\section{Access Hoare Triples}\label{sec:aH triples}

Following Hoare \cite{hoare:69}, we take assertions to be semantical, that is, 
with no fixed syntax, and no restrictions on the expressions in the language.
The reason for this approach is that we want to focus on the general properties of access Hoare logic and their comparison to Hoare logic and related frameworks.

Hoare logic is using Hoare triples to describe
how the execution of code impacts on the state of the machine.
The Hoare triple \HoareTriple{P}{C}{Q} consists of assertions $P$
and $Q$, and a piece of code $C$.
$P$~is called the \emph{pre-condition}, $Q$ the \emph{post-condition}.
\HoareTriple{P}{C}{Q} expresses that when $P$ is true and $C$ is executed and
terminates
then $Q$ is true.
In other words, $P$ is a \emph{sufficient} condition that $Q$ holds after
successfully executing $C$.
It can be expressed more formally by quantifying over all possible states:
\begin{equation}\label{formal H triple}
  \forall s,s' [ \executes{C}{s}{s'} \land P(s) \limp Q(s')]
\end{equation}
where a state is a mapping of variables to values,
and $\executes{C}{s}{s'}$ expresses that the execution of $C$ takes
the executing machine from state $s$ to state~$s'$.

We are interested in the opposite direction, that $P$ is a \emph{necessary}
condition for $Q$
to hold after executing $C$.
We again use a triple \SFHTriple{P}{C}{Q}, which we denote \emph{access Hoare
triple}, consisting again of a pre-condition $P$, a post-condition $Q$, and a
program $C$.
However, now \SFHTriple{P}{C}{Q} expresses that when the execution of $C$
results in $Q$ being true, then
$P$ must have been true before, which can be expressed more formally as
\begin{equation}\label{formal aH triple}
  \forall s,s' [ \executes{C}{s}{s'} \wedge Q(s') \limp P(s)]
\end{equation}
In this case, we also say that \emph{\SFHTriple{P}{C}{Q} is true} or \emph{valid}.

There is a close relationship between access Hoare triples and Hoare triples
by introducing negation:

\begin{theorem}\label{equivalence aHt and Ht}
    $\SFHTriple{P}{C}{Q}$ is equivalent to
    $\HoareTriple{\neg P}{C}{\neg Q}$
\end{theorem}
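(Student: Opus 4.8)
The plan is to do nothing more than unfold both sides to the formal definitions given in~\eqref{formal H triple} and~\eqref{formal aH triple} and observe that the two resulting first-order formulas are propositionally equivalent, so no induction on the structure of $C$ is required. By definition $\HoareTriple{\neg P}{C}{\neg Q}$ abbreviates
\begin{equation*}
  \forall s,s' [ \executes{C}{s}{s'} \land \neg P(s) \limp \neg Q(s')],
\end{equation*}
while $\SFHTriple{P}{C}{Q}$ abbreviates
\begin{equation*}
  \forall s,s' [ \executes{C}{s}{s'} \land Q(s') \limp P(s)].
\end{equation*}
The quantifier prefixes coincide, so it suffices to show that for every fixed pair of states $s,s'$ the two matrices are equivalent.

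To that end I would fix $s,s'$ and chain a few propositional equivalences. First curry the antecedent, rewriting $\executes{C}{s}{s'} \land A \limp B$ as $\executes{C}{s}{s'} \limp (A \limp B)$; then apply contraposition to the inner implication, turning $Q(s') \limp P(s)$ into $\neg P(s) \limp \neg Q(s')$; finally uncurry again to recover $\executes{C}{s}{s'} \land \neg P(s) \limp \neg Q(s')$. Reintroducing the universal quantifiers over $s$ and $s'$ yields the claimed equivalence of the two triples.

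The statement is essentially immediate once the definitions are written out, so there is no genuine obstacle. The only points worth a word of care are that the step replacing an implication by its contrapositive uses that assertions are interpreted classically (i.e.\ $\neg\neg Q(s') \Iff Q(s')$), and that the shared hypothesis $\executes{C}{s}{s'}$ must be carried along unchanged through the currying steps rather than being swept into the contraposition.
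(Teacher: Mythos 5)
Your proposal is correct and matches the paper's proof, which likewise reduces the claim to the propositional equivalence of $Q(s')\limp P(s)$ and $\neg P(s)\limp\neg Q(s')$ under the shared hypothesis $\executes{C}{s}{s'}$. Your added remark about the classical nature of contraposition is also consistent with the paper's Remark~\ref{equivalenceIntuitionistic}.
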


\begin{proof}
    This follows since $Q(s') \limp P(s)$ is
    equivalent to $\neg P(s) \limp \neg Q(s')$.
\end{proof}


At this point we could introduce a calculus for access Hoare logic
indirectly as negated Hoare logic in the sense of
Theorem~\ref{equivalence aHt and Ht}.
However, it is of advantage to have a direct calculus for access
Hoare logic.
For example, a calculus for access Hoare logic would form the basis for systems that can verify
access security in the sense of access Hoare logic using tools like
theorem provers, similar to SPARK~\cite{SPARKAda:Homepage}
and Dafny~\cite{dafny:Homepage} for verifying correctness based on standard
Hoare logic.
%
%
%
A reduction from access Hoare logic to standard Hoare logic,
as outlined above, would introduce the contraposition of implications
in verification proofs,
i.e.\ $\neg B\limp\neg A$ instead of $A\limp B$.%
\footnote{In more detail,
  a proof of an access Hoare triple \SFHTriple PCQ
  will usually involve subproofs of access Hoare triples
  of  the form \SFHTriple{P'}{C'}{Q'} and \SFHTriple{P''}{C''}{Q''},
  for certain subprograms $C'$ and $C''$,
  where $C'$ is immediately followed by $C''$.
  In such a situation, arguing access Hoare logic style
  from post- to pre-conditions
  requires proving the implication $P''\limp Q'$
  from the pre-condition $P''$ of the second subproof
  to the post-condition $Q'$ of the first.
  If we consider this situation by transformation to Hoare logic,
  we are faced with \HoareTriple{\neg P}{C}{\neg Q} and subproofs of
  \HoareTriple{\neg P'}{C'}{\neg Q'} and \HoareTriple{\neg P''}{C''}{\neg Q''},
  now having to prove the implication $\neg Q'\limp\neg P''$
  while arguing Hoare logic style from pre- to post-conditions.
  }
For example, a straightforward implication such as
$x=2\limp x+2=4$
may translate into its contrapositive
$\neg(x+2=4)\limp\neg(x=2)$.
While logically equivalent,\footnote{In classical logic} the latter is
more difficult to understand and deal with in verification tasks.
In general, this may even break
verifiability.\footnote{SPARK~\cite{SPARKAda:Homepage},
  which supports  Hoare logic,
  tries to prove formulas generated by a verification process using an SMT solver.
  If that fails, SPARK offers to use the
  interactive theorem prover Coq/Rocq \cite{sparkAda:Doc:Coq} instead.
  Another option when facing unprovability in SPARK, is to analyze formulas manually
  in Why3 \cite{why3:homepage,adacore:SPARKdoc:ArchitectureQualityAssurance:Why3},
  in order to find a reason for unprovability.
  In the last two cases, the syntactical form of formulas matters:
  For example,
  Coq/Rocq is based on intuitionistic logic, hence transformation of
  formulas need to respect intuitionistic logic.
}%

We will introduce
a direct calculus for access Hoare logic
by stating its rules similar to those for Hoare logic.
Soundness and completeness of such a calculus could be proven
using soundness and completeness of Hoare logic and
Theorem~\ref{equivalence aHt and Ht}.
As it turns out, proving soundness and completeness directly
does not take much longer, and has the added benefit
of motivating the rules, and keeping the paper selfcontained.
Furthermore, it also supports a more direct and intuitive
understanding of the rules of access Hoare logic as a formal method
for access security.

\begin{remark}\label{equivalenceIntuitionistic}
    Following the previous discussion, we analyze
    Theorem~\ref{equivalence aHt and Ht} from the perspective of
    intuitionistic logic.
    The direction
    $\SFHTriple{P}{C}{Q} \rightarrow \HoareTriple{\neg P}{C}{\neg Q}$
    is valid in intuitionistic logic,
    while for the reverse direction we only obtain
    $\HoareTriple{\neg P}{C}{\neg Q} \to \SFHTriple{\neg \neg P}{C}{\neg\neg Q}$.
    Observe that $\neg\neg P\limp P$ is valid in classical logic, but not in intuitionistic logic in general.
\end{remark}

\section{The Calculus}\label{sec:calculus}

We define the calculus for \SFHTriple{P}{C}{Q},
which we denote \emph{access Hoare logic},
following
that for Hoare logic~\cite{hoare:69}, and show that it
is sound with respect to the interpretation of access Hoare triples as given
in~\eqref{formal aH triple} in the previous section.
The development will focus on the \ccwhile language similar to
\cite{hoare:69}, a standard imperative language.
We will show completeness in Section~\ref{sec:completeness}.

\subsection{Axiom for empty statement}

The empty statement \ccskip does not change the state of the program,
thus whatever holds after \ccskip must have held before.
\begin{prooftree}
    \AxiomC{}
    \UnaryInfC{\SFHTriple{P}{\ccskip}{P}}
\end{prooftree}
Soundness of this rule is immediate.

\subsection{Axiom scheme of assignment}
An assignment of the from
\[
  V:=E
\]
with $V$ a variable and $E$ an expression without side effects, but
possibly containing $V$, updates variable $V$ with the result of evaluating
$E$.
Any assertion $P$ that is true after the assignment is made must also have
been true of the value resulting from the evaluation of $E$.
\begin{prooftree}
    \AxiomC{}
    \UnaryInfC{\SFHTriple{P[E/V]}{V:=E}{P}}
\end{prooftree}
Here
$P[E/V]$ denotes the result of replacing any free occurrence of $V$ in $P$ by
$E$.


Soundness follows from the same considerations as for
standard Hoare logic:
Let $s,s'$ be such that $\executes{V:=E}{s}{s'}$, then
$P[E/V](s)$ is equivalent to $P(s')$.

\subsection{Consequence rule}
The consequence rule is the main deviation from Hoare's original calculus,
reflecting that we are interested in expressing
that the pre-condition is
necessary for the post-condition.
Thus, contrary to standard Hoare logic, we allow to weaken the pre-condition,
and to strengthen the post-condition.
\begin{prooftree}
    \AxiomC{$P_1\lpmi P_2$}
    \AxiomC{\SFHTriple{P_2}{S}{Q_2}}
    \AxiomC{$Q_2\lpmi Q_1$}
    \TrinaryInfC{\SFHTriple{P_1}{S}{Q_1}}
\end{prooftree}

For soundness,
let $s,s'$ be such that $\executes{S}{s}{s'}$ and $Q_1(s')$ hold.
We need to show that $P_1(s)$ is true, assuming 
validity of the premises of
this rule.
From $Q_1(s')$ we obtain $Q_2(s')$ from the right premise.
Validity of the middle premise then shows $P_2(s)$, from which the 
validity of
the left premise implies~$P_1(s)$.

\subsection{Rule of composition}

As in standard Hoare logic, $S;T$ denotes the sequential composition of $S$
and $T$,
where $S$ executes prior to $T$.
\begin{prooftree}
    \AxiomC{\SFHTriple{P}{S}{R}}
    \AxiomC{\SFHTriple{R}{T}{Q}}
    \BinaryInfC{\SFHTriple{P}{S;T}{Q}}
\end{prooftree}

For soundness,
let $s,s'$ be such that $\executes{S;T}{s}{s'}$ and $Q(s')$ hold.
We need to show that $P(s)$ is true, assuming 
validity of the premises of
this rule.
Let $s''$ such that $\executes{S}{s}{s''}$ and $\executes{T}{s''}{s'}$.
Validity of the right premise yields $R(s'')$, from which
validity of the
left shows $P(s)$.

\subsection{Conditional rule}

The conditional rule also differs from standard Hoare logic.
The assertion $B$ is used to `choose a branch' corresponding to the `then' or
`else' part
of \ccifte{B}{S}{T}.
For example, the left premise of the rule, in both standard and access Hoare
logic,
expresses `if $B$ then $(P) S (Q)$':
For standard Hoare logic, this expands to
'if $B$ then if $P$ and $S$ execute successfully, then $Q$',
which is logically equivalent to
'if $B\land P$ and $S$ execute successfully, then $Q$'.
For access Hoare logic, the expression expands to
'if $B$ then if $Q$ and $S$ execute successfully, then $P$',
which is logically equivalent to
'if $Q$ and $S$ executes successfully, then $B\limp P$'.
Using a similar consideration for the right premise,
the conditional rule takes the following form:

\begin{prooftree}
    \AxiomC{\SFHTriple{B\limp P}{S}{Q}}
    \AxiomC{\SFHTriple{\neg B\limp P}{T}{Q}}
    \BinaryInfC{\SFHTriple{P}{\ccifte{B}{S}{T}}{Q}}
\end{prooftree}
To formally argue for soundness of this rule,
let $s,s'$ be such that $\executes{\ccifte{B}{S}{T}}{s}{s'}$ and $Q(s')$ hold.
We need to show that $P(s)$ holds, assuming 
validity of the two premises.

If $B(s)$ is true, then $\executes{S}{s}{s'}$, so $B(s)\limp P(s)$ is true
by 
validity of the left premise.
Thus, $P(s)$ holds.
Otherwise, $B(s)$ is false, and $\executes{T}{s}{s'}$, hence $\neg B(s)\limp
P(s)$ is true by 
validity of the right premise.
Again, $P(s)$ holds.

\subsection{While rule}

Similarly to the conditional rule, the pre-condition of the premise changes from
conjunction to implication compared to standard Hoare logic.
Furthermore, we weaken the post-condition
to only condition reachable states,
namely on those satisfying
$\neg B$
-- this is needed to be able to prove completeness in
Section~\ref{sec:completeness}.

\begin{prooftree}
    \AxiomC{\SFHTriple{B\limp P}{S}{P}}
    \UnaryInfC{\SFHTriple{P}{\ccwhiledo{B}{S}}{\neg B\limp P}}
\end{prooftree}
To prove soundness of this rule,
let $s,s'$ be such that $\executes{\ccwhiledo{B}{S}}{s}{s'}$ and
$\neg B(s') \limp P(s')$ hold.
We need to show that $P(s)$ holds, assuming 
validity of the premise.

As \ccwhiledo{B}{S} terminates, there are some $k$ and states $s=s_0$, \dots,
$s_k=s'$ such that
\[
    \forall i<k [ B(s_i)\land \executes{S}{s_i}{s_{i+1}}]
\]
Furthermore, termination implies $\neg B(s_k)$, hence $P(s_k)$ by assumption.
Hence, \executes{S}{s_{k-1}}{s_{k}} and 
validity of the premise
\SFHTriple{B\limp P}{S}{P} show
$B(s_{k-1})\limp P(s_{k-1})$.
As we also have $B(s_{k-1})$, we obtain $P(s_{k-1})$ by modus ponens.
Inductively,
we obtain $P(s_0)$
as required.


\section{Weakest and Strongest Pre-Conditions}\label{sec:precondition}

The aim of this section is to show a connection
between Hoare triples and access Hoare triples,
namely that modulo termination,
weakest pre-conditions for Hoare triples,
and strongest pre-conditions for access Hoare triples
coincide.
We start by repeating standard definitions of
weakest pre-conditions for Hoare triples,
and provide a definition of strongest pre-conditions for access Hoare triples.
We then give explicit characterizations of weakest pre-conditions for Hoare
triples
and strongest pre-conditions for access Hoare triples.
We conclude by showing that, for a fixed program and post condition,
the strongest pre-condition for Hoare triples equals to the
intersection of the weakest pre-condition for Hoare Triples with the set of states
on which the program terminates.


\begin{definition}[Weakest pre-condition for Hoare triples]
Given a program $C$ and a post-condition $Q$, a \emph{weakest pre-condition} is
a predicate $P'$ such that
\[
 \forall P,\, \HoareTriple{P}{C}{Q}\ \Leftrightarrow\ P\limp P'
\]
\end{definition}

\begin{definition}[Strongest pre-condition for access Hoare triples]
Given a program $C$ and a post-condition $Q$, a \emph{strongest pre-condition}
is a predicate $P'$ such that
\[
    \forall P,\, \SFHTriple{P}{C}{Q}\ \Leftrightarrow\ P'\limp P
\]
\end{definition}

It follows immediately from the definitions that weakest and strongest
pre-conditions are
pre-conditions for Hoare logic and access Hoare logic, respectively,
and therefore are unique.




\begin{lemma}
\label{characterizationPw}
Given a program $C$ and a post-condition $Q$, let $P^w$ be defined by
\[
   P^w(s) \ \Iff\ \forall s',\, \executes{C}{s}{s'}\limp Q(s')
\]
Then $P^w$ is the weakest pre-condition for $C$ and $Q$ w.r.t.~Hoare triples.
\end{lemma}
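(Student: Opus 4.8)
The plan is to prove the two directions of the biconditional $\HoareTriple{P}{C}{Q} \Iff (P \limp P^w)$ separately, unwinding both the definition of the Hoare triple from~\eqref{formal H triple} and the definition of $P^w$. First I would spell out the hypotheses: $\HoareTriple{P}{C}{Q}$ means $\forall s,s'[\executes{C}{s}{s'} \land P(s) \limp Q(s')]$, and $P^w(s)$ means $\forall s'[\executes{C}{s}{s'} \limp Q(s')]$. So the claim is essentially a quantifier-shuffling exercise, with the key point being that $P^w$ is designed to be exactly the condition $P$ is required to imply.

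For the forward direction, assume $\HoareTriple{P}{C}{Q}$ and fix an arbitrary state $s$ with $P(s)$; I must show $P^w(s)$, i.e.\ for every $s'$ with $\executes{C}{s}{s'}$ we have $Q(s')$. But that is immediate from the Hoare triple applied to this $s$ and $s'$, using $P(s)$. For the reverse direction, assume $P \limp P^w$ and show $\HoareTriple{P}{C}{Q}$: fix $s,s'$ with $\executes{C}{s}{s'}$ and $P(s)$; then $P^w(s)$ holds by assumption, and instantiating its universal quantifier at this particular $s'$ yields $Q(s')$, as required.

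The remaining obligation is to note that $P^w$ as defined is genuinely \emph{a} precondition, i.e.\ $\HoareTriple{P^w}{C}{Q}$ holds; this follows by taking $P := P^w$ in the just-proven equivalence, since $P^w \limp P^w$ is trivially valid. Combined with the equivalence, this shows $P^w$ satisfies the defining property of the weakest precondition. Uniqueness was already observed in the text immediately preceding the lemma, so nothing further is needed there.

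There is no real obstacle here — the argument is entirely a matter of carefully interchanging the roles of the quantified states and observing that $P^w$ is the pointwise-best choice. The only thing to be a little careful about is the direction of the implication in the two definitions (weakest precondition uses $P \limp P^w$, and the Hoare triple interpretation bakes in $P(s)$ on the left of the implication), so I would write the instantiations out explicitly rather than gesturing at them, to make the bookkeeping transparent.
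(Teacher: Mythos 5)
Your proof is correct and follows essentially the same route as the paper's: both directions of the biconditional are established by unwinding the definitions of \HoareTriple{P}{C}{Q} and $P^w$ and instantiating the quantifiers appropriately. The extra observation that $P^w$ is itself a precondition (by taking $P := P^w$) matches the remark the paper makes just before the lemma, so nothing is missing.
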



\begin{proof}
To show that $P^w$ is the weakest pre-condition, it we need to show that
\[
  \forall P,\, \HoareTriple{P}{C}{Q} \Iff P\limp P^w
\]
Let $P$ be given.
For the direction from right to left, assume $P\limp P^w$.
Let $s,s'$ be given with \executes{C}{s}{s'} and $P(s)$,
hence $P^w(s)$.
By the definition of $P^w$ and the choice of $s'$, we have $Q(s')$.
This shows $\HoareTriple{P}{C}{Q}$.

For the other direction, assume $\HoareTriple{P}{C}{Q}$ and let $s$ with
$P(s)$ be given.
We need to show $P^w(s)$.
To this end, assume $s'$ such that $\executes{C}{s}{s'}$.
By $\HoareTriple{P}{C}{Q}$ we obtain $Q(s')$, as required.
Together, this shows $P\limp P^w$.
\end{proof}

\begin{lemma}
\label{characterizationPs}
Given a program $C$ and a post-condition $Q$, let $P^s$ be defined by
\[
   P^s(s) \ \Iff\ \exists s',\, \executes{C}{s}{s'}\land Q(s')
\]
Then $P^s$ is the strongest pre-condition for $C$ and $Q$ w.r.t.~access Hoare
triples.
\end{lemma}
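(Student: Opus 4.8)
The plan is to unfold the definition of strongest precondition and verify the biconditional $\SFHTriple{P}{C}{Q} \Iff (P^s \limp P)$ for an arbitrary predicate $P$, mirroring the proof of Lemma~\ref{characterizationPw} but with the roles of the universal and existential quantifiers (and of implication and conjunction) swapped, as one expects from the duality between~\eqref{formal H triple} and~\eqref{formal aH triple}.

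First I would fix $P$ and prove the direction from left to right: assume $\SFHTriple{P}{C}{Q}$ and let $s$ be a state with $P^s(s)$. By the defining equivalence of $P^s$ there is a state $s'$ with $\executes{C}{s}{s'}$ and $Q(s')$. Applying the interpretation~\eqref{formal aH triple} of the valid triple $\SFHTriple{P}{C}{Q}$ to this pair $s,s'$ yields $P(s)$. Since $s$ was arbitrary, this establishes $P^s \limp P$.

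For the converse, assume $P^s \limp P$ and let $s,s'$ be arbitrary states with $\executes{C}{s}{s'}$ and $Q(s')$; I must show $P(s)$. The state $s'$ witnesses the existential in the definition of $P^s$, so $P^s(s)$ holds, and the assumption then gives $P(s)$. Hence $\SFHTriple{P}{C}{Q}$ holds by~\eqref{formal aH triple}. Together the two directions show that $P^s$ is a strongest precondition for $C$ and $Q$, and uniqueness was already noted right after the definitions.

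I do not expect a genuine obstacle here: the argument is entirely routine and is, in effect, the De~Morgan dual of Lemma~\ref{characterizationPw}. An alternative, slightly slicker route would be to derive it from Lemma~\ref{characterizationPw} together with Theorem~\ref{equivalence aHt and Ht}, observing that the strongest precondition for $\SFHTriple{\cdot}{C}{Q}$ is the negation of the weakest precondition for $\HoareTriple{\cdot}{C}{\neg Q}$; I would nonetheless give the direct proof, for the same self-containedness reasons stated earlier in the paper.
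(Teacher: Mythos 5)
Your proof is correct and is essentially identical to the paper's own argument: both directions unfold the definition of $P^s$ and the semantic interpretation~\eqref{formal aH triple} in exactly the same way, differing only in the order in which the two implications are presented. The remark about deriving the lemma from Lemma~\ref{characterizationPw} via Theorem~\ref{equivalence aHt and Ht} is a valid alternative, but the direct route you (and the paper) take is the intended one.
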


\begin{proof}
To show that $P^s$ is the strongest pre-condition, we need to show that
\[
  \forall P,\, \SFHTriple{P}{C}{Q} \Iff P^s\limp P
\]
Let $P$ be given.
For the direction from right to left, assume $P^s\limp P$.
Let $s,s'$ be given with \executes{C}{s}{s'} and $Q(s')$.
Then $P^s(s)$ by the definition of $P^s$.
Hence $P(s)$ by assumption.
This shows $\SFHTriple{P}{C}{Q}$.

For the other direction, assume $\SFHTriple{P}{C}{Q}$ and let $s$ with $P^s(s)$
be given.
By the definition of $P^s$, there exists $s'$ such that
$\executes{C}{s}{s'}$ and $Q(s')$.
By $\SFHTriple{P}{C}{Q}$ we obtain $P(s)$.
This shows $P^s\limp P$.
\end{proof}

\begin{corollary}
\label{corollaryPwEqualPsU}
Let $P^w$ be the weakest pre-condition for $C$ and $Q$ w.r.t.~Hoare triples, and
$P^s$ be the strongest pre-condition for $C$ and $Q$ w.r.t.~access Hoare triples.
Then $P^s$ is equal to the intersection of $P^w$ with the set of states in which $C$ terminates.
%
\end{corollary}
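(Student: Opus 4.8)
The plan is to unfold the explicit characterizations from Lemmas~\ref{characterizationPw} and~\ref{characterizationPs} and compare them pointwise. Write $T(s)$ for the predicate ``$C$ terminates when started in $s$'', i.e.\ $T(s) \Iff \exists s',\, \executes{C}{s}{s'}$; here we rely on the determinism of the \ccwhile language, so that if $C$ terminates from $s$ then the terminal state $s'$ is unique. The claim then amounts to showing $P^s(s) \Iff P^w(s) \land T(s)$ for every state $s$, where $P^w$ and $P^s$ are as in the two lemmas.

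First I would argue the left-to-right direction: if $P^s(s)$ holds then by Lemma~\ref{characterizationPs} there is some $s'$ with $\executes{C}{s}{s'}$ and $Q(s')$. The existence of such an $s'$ immediately gives $T(s)$. For $P^w(s)$, take any $s''$ with $\executes{C}{s}{s''}$; by determinism $s'' = s'$, so $Q(s'')$ holds, which is exactly what $P^w(s)$ requires. Conversely, assume $P^w(s) \land T(s)$. By $T(s)$ pick $s'$ with $\executes{C}{s}{s'}$, and by $P^w(s)$ applied to this $s'$ we get $Q(s')$; hence $\executes{C}{s}{s'} \land Q(s')$ witnesses $P^s(s)$ by Lemma~\ref{characterizationPs}. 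Putting the two directions together establishes the equality of predicates, which is the intersection claim.

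The only subtlety worth flagging is the appeal to determinism in the left-to-right direction: without it, $P^s(s)$ would only guarantee that \emph{some} run of $C$ from $s$ reaches a $Q$-state, not that \emph{every} terminating run does, so $P^w(s)$ could fail. Since the paper works with the deterministic \ccwhile language and writes $\executes{C}{s}{s'}$ as a functional relation throughout (e.g.\ the soundness proofs of the composition and while rules single out ``the'' intermediate or terminal states), this is legitimate, but it is the one place where the argument uses a property of the language rather than pure logic. I would state this assumption explicitly at the start of the proof. The remaining steps are routine unfolding of definitions and require no computation.
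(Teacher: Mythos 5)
Your proposal is correct and follows essentially the same route as the paper: the paper's proof is exactly the chain of equivalences $P^s(s) \Iff \exists s'(\executes{C}{s}{s'}\land Q(s')) \Iff T(s)\land P^w(s)$, with determinism invoked at the same middle step you identify. Your version merely makes the two directions and the role of determinism more explicit.
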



\begin{proof} Let $T$ be the set of states in which $C$ terminates.
Then we can argue
\[\begin{array}[t]{@{}lcl}
P^s(s) &\Iff& \exists s',\, \executes{C}{s}{s'}\land Q(s')\\
       &\Iff& (\exists s',\, \executes{C}{s}{s'})\land (\forall s',\, \executes{C}{s}{s'}\to Q(s'))\\
&\Iff& T(s) \land P^w(s)
\end{array} \]
where we have used that $C$ is deterministic.
\end{proof}

\begin{corollary}
\label{corollaryPwEqualPsIfTerminating}
  For programs that terminate on all inputs,
  weakest pre-conditions for Hoare
  logic and strongest pre-conditions for access Hoare logic coincide.
\end{corollary}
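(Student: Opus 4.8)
The plan is to derive this directly from Corollary~\ref{corollaryPwEqualPsU}. That corollary already establishes that, for a fixed program $C$ and postcondition $Q$, the strongest precondition $P^s$ for access Hoare triples equals the intersection of the weakest precondition $P^w$ for Hoare triples with the set $T$ of states on which $C$ terminates. So the only thing that remains is to see what the termination hypothesis does to $T$.

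First I would unpack the hypothesis: ``$C$ terminates on all inputs'' means precisely that $T(s)$ holds for every state $s$, i.e.\ $T$ is the full set of states. Then the intersection of $P^w$ with $T$ is just $P^w$ itself, so Corollary~\ref{corollaryPwEqualPsU} yields $P^s = P^w$ for the given $Q$. Since $Q$ was arbitrary, this gives that for every program that terminates on all inputs and every postcondition, the weakest precondition for Hoare logic and the strongest precondition for access Hoare logic coincide, which is the claim.

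There is essentially no obstacle here: the statement is a one-line specialization of the preceding corollary, and all the substantive work — the explicit characterizations of $P^w$ and $P^s$ in Lemmas~\ref{characterizationPw} and~\ref{characterizationPs}, together with determinism of the \ccwhile language used in Corollary~\ref{corollaryPwEqualPsU} — has already been carried out. The only point that deserves a moment's care is reading ``terminates on all inputs'' as a statement quantified over \emph{all} states, so that $T$ genuinely becomes the whole state space rather than merely a large subset; once that is fixed the conclusion is immediate.
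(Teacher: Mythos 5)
Your proposal is correct and matches the paper's intent exactly: the paper states this corollary without proof precisely because it is the immediate specialization of Corollary~\ref{corollaryPwEqualPsU} to the case where the termination set $T$ is the whole state space, which is the argument you give.
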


\begin{remark}\label{remGeneralisationintuit}
\begin{enumerate}
\item
\label{remGeneralisation}
    The assertions of Lemmas \ref{characterizationPw}, \ref{characterizationPs}, and Corollaries
    \ref{corollaryPwEqualPsU}, \ref{corollaryPwEqualPsIfTerminating}
    hold in a general, abstract setting as long as the execution relation is deterministic, that is, they hold
    for any set $S$ (representing states) and any relation $\executes{C}{s}{s'}$ that satisfies
    \[ \forall s,s',s'',\ \executes{C}{s}{s'} \land \executes{C}{s}{s''} \to s' = s'' \enspace.\]

\item
    Corollary \ref{corollaryPwEqualPsU} characterizes $P^s$.
    We can also characterize $P^w$, using a similar proof:
    $P^w$ consists of all states $s$ which satisfy
    that if $C$ terminates on $s$,
    then $s$ is in $P^s$.

\end{enumerate}
\end{remark}

\section{Soundness and Completeness}\label{sec:completeness}

Access Hoare logic, denoted \ahl{}, is given by
access Hoare triples
and their semantics as defined in Section~\ref{sec:aH triples},
and by the set of rules defined in Section~\ref{sec:calculus}.
When introducing the rules, we showed already that all axioms and rules
of \ahl{} are sound.
%
%
%
%
%
This implies the soundness of provable access Hoare triples:
\begin{theorem}[Soundness of \ahl]\label{thm:soundness}
If $\SFHTriple{P}{C}{Q}$ is provable in \ahl, then $\SFHTriple{P}{C}{Q}$ is
true.
\end{theorem}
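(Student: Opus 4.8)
The plan is to prove soundness by a routine induction on the structure of derivations in \ahl, leveraging the fact that each individual axiom and rule was already shown to be sound when it was introduced in Section~\ref{sec:calculus}. Concretely, I would argue that the set of true access Hoare triples is closed under all the axioms and inference rules of \ahl; since a provable triple is obtained by finitely many applications of these, every provable triple lies in this set, i.e.\ is true in the sense of~\eqref{formal aH triple}.

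First I would set up the induction: a proof of $\SFHTriple{P}{C}{Q}$ in \ahl{} is a finite derivation tree whose leaves are instances of the axioms (the empty-statement axiom and the assignment axiom scheme) and whose internal nodes are applications of the consequence rule, the composition rule, the conditional rule, or the while rule. The induction hypothesis is that all access Hoare triples appearing at the premises of the final rule application are true. The base cases are the two axioms: for $\SFHTriple{P}{\ccskip}{P}$, any execution $\executes{\ccskip}{s}{s'}$ has $s'=s$, so $Q(s')=P(s)$ immediately; for $\SFHTriple{P[E/V]}{V:=E}{P}$, any execution $\executes{V:=E}{s}{s'}$ makes $P(s')$ equivalent to $P[E/V](s)$, as noted when the axiom was introduced. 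For the inductive step, I would simply invoke, for each of the four rules, the soundness argument already presented in Section~\ref{sec:calculus}: given the premises are true (induction hypothesis), the displayed argument there shows the conclusion is true.

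There is essentially no main obstacle here, since all the work was front-loaded into the presentation of the calculus; the only thing to be careful about is that the soundness arguments for the premise-carrying rules were phrased in terms of \emph{validity of the premises}, which is exactly what the induction hypothesis supplies, so the pieces fit together without further effort. If anything, the subtlest point is the while rule, where soundness relied on a secondary finite induction on the number $k$ of loop iterations until termination; but that argument is already spelled out in the excerpt, so I would just cite it. I would close by remarking that, alternatively, soundness could be derived from Theorem~\ref{equivalence aHt and Ht} together with soundness of standard Hoare logic, but the direct route is immediate given the per-rule soundness proofs already given.
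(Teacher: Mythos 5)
Your proposal is correct and matches the paper's approach exactly: the paper proves each axiom and rule sound as it is introduced in Section~\ref{sec:calculus} and then observes that soundness of provable triples follows, which is precisely the induction on derivations you spell out. No gaps.
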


We will now show the converse, that \ahl is also complete:
\begin{theorem}[Completeness of \ahl]\label{thm:completeness}
If $\SFHTriple{P}{C}{Q}$ is true, then $\SFHTriple{P}{C}{Q}$ is provable in
\ahl.
\end{theorem}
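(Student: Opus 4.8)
The plan is to mimic the standard completeness proof for Hoare logic, using strongest preconditions as the analogue of weakest preconditions. The key idea is: to prove $\SFHTriple{P}{C}{Q}$ whenever it is true, it suffices to prove the "canonical" triple $\SFHTriple{P^s}{C}{Q}$, where $P^s$ is the strongest precondition for $C$ and $Q$ from Lemma~\ref{characterizationPs}. Indeed, if $\SFHTriple{P}{C}{Q}$ is true, then by the defining property of strongest preconditions we have $P^s\limp P$, and trivially $Q\limp Q$, so one application of the consequence rule derives $\SFHTriple{P}{C}{Q}$ from $\SFHTriple{P^s}{C}{Q}$. Hence the whole problem reduces to showing that the canonical triple $\SFHTriple{P^s}{C}{Q}$ is provable in \ahl{} for every program $C$ and postcondition $Q$.

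First I would prove that claim by structural induction on $C$. For $C = \ccskip$: the strongest precondition is $Q$ itself (since $\executes{\ccskip}{s}{s'}$ iff $s=s'$), and $\SFHTriple{Q}{\ccskip}{Q}$ is an axiom. For $C = (V:=E)$: the strongest precondition is provably equivalent to $Q[E/V]$, which is exactly the postcondition-to-precondition form of the assignment axiom; a consequence-rule step closes the gap. For $C = S;T$: let $R^s$ be the strongest precondition of $T$ for $Q$ and $P^s$ the strongest precondition of $S$ for $R^s$; by induction $\SFHTriple{R^s}{T}{Q}$ and $\SFHTriple{P^s}{S}{R^s}$ are provable, and the composition rule gives $\SFHTriple{P^s}{S;T}{Q}$ — one then checks that $P^s$ so defined is indeed the strongest precondition of $S;T$ for $Q$ (this is where determinism of execution is used, exactly as in Corollary~\ref{corollaryPwEqualPsU}). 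For $C = \ccifte{B}{S}{T}$: by induction we have provable canonical triples $\SFHTriple{P^s_S}{S}{Q}$ and $\SFHTriple{P^s_T}{T}{Q}$; one verifies that $P^s(s)$ is equivalent to $(B(s)\to P^s_S(s))\land(\neg B(s)\to P^s_T(s))$, and then the two premises of the conditional rule — $\SFHTriple{B\limp P^s}{S}{Q}$ and $\SFHTriple{\neg B\limp P^s}{T}{Q}$ — follow from the inductive triples by the consequence rule, since $B\limp P^s$ is equivalent to $B\limp P^s_S$ (which is implied by $P^s_S$) and dually on the other branch.

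The main obstacle will be the while case, $C = \ccwhiledo{B}{S}$, just as in ordinary Hoare logic. Here I would take $P$ to be the strongest precondition of the whole loop for postcondition $\neg B\limp P$ — but more cleanly, I would define $P$ directly as the predicate satisfied by exactly those states $s$ from which every finite partial run of the loop body stays in the "strongest-precondition-of-the-rest" set; concretely, $P(s)$ should hold iff for the (unique, by determinism) state $s'$ reachable from $s$ by running the loop to completion, $P(s')$ held at entry — i.e.\ $P$ is characterised as a fixed point capturing "whatever the loop will need from the final state". The crucial lemma is that this $P$ satisfies $\SFHTriple{B\limp P}{S}{P}$, which one proves using the inductive hypothesis applied to $S$ (its strongest precondition for $P$) together with the observation that running $S$ once from a state in $B$ and then the loop is the same as running the loop; then the while rule yields $\SFHTriple{P}{\ccwhiledo{B}{S}}{\neg B\limp P}$. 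Finally one must check that this $P$ really is the strongest precondition for the loop and postcondition $\neg B\limp P$ (again invoking determinism and the fact that termination forces $\neg B$ to hold at the final state, as in the soundness argument for the while rule), so that a closing application of the consequence rule recovers the arbitrary true triple $\SFHTriple{P}{\ccwhiledo{B}{S}}{Q}$. Pinning down the right invariant and verifying the premise of the while rule for it is the delicate part; everything else is routine given Lemmas~\ref{characterizationPw} and~\ref{characterizationPs}.
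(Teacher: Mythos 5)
Your overall strategy is exactly the paper's: reduce the general case to the canonical triple $\SFHTriple{P^s}{C}{Q}$ via one application of the consequence rule, and prove that canonical triple by structural induction on $C$. The \ccskip, assignment, composition, and conditional cases match the paper's proof of Proposition~\ref{prop:completeness} essentially step for step (one small quibble: determinism is \emph{not} needed for $\SPCo{S;T}=\SPCo{S}\circ\SPCo{T}$, since $P^s$ is defined existentially; determinism only enters in Corollary~\ref{corollaryPwEqualPsU}, which relates $P^s$ to $P^w$ and plays no role in the completeness proof).

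The gap is in the while case, which you yourself flag as the delicate part but do not resolve. Your candidate invariant is described circularly: you propose ``the strongest precondition of the whole loop for postcondition $\neg B\limp P$'' where $P$ is the very predicate being defined, and the subsequent ``fixed point'' paraphrase (``$P(s)$ holds iff \dots\ $P(s')$ held at entry'') does not pin down a well-defined predicate either. The paper's resolution is simpler and involves no fixed-point construction: take $P:=\SPC{C}{Q}$ for the \emph{original} postcondition $Q$, where $C=\ccwhiledo{B}{S}$ is the whole loop. Then two implications suffice: first, $\SPC{S}{P}\limp(B\limp P)$, because a state satisfying $B$ from which one execution of $S$ leads into $P$ is itself the start of a terminating, $Q$-reaching run of the loop (prepend one iteration); second, $Q\limp(\neg B\limp P)$, because a state satisfying $\neg B$ and $Q$ is a zero-iteration run of the loop ending in $Q$. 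The first implication turns the induction hypothesis $\SFHTriple{\SPC SP}{S}{P}$ into the premise $\SFHTriple{B\limp P}{S}{P}$ of the while rule via the consequence rule, yielding $\SFHTriple{P}{C}{\neg B\limp P}$; the second strengthens the postcondition back to $Q$. Your closing step --- ``check that this $P$ really is the strongest precondition for the loop and postcondition $\neg B\limp P$'' --- is also not what is needed: the inductive claim requires $\SFHTriple{\SPC CQ}{C}{Q}$ with postcondition $Q$, and the passage from $\neg B\limp P$ to $Q$ is exactly the second implication above, not a strongest-precondition computation.
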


We could prove this theorem by relying on the completeness of Hoare logic and
Theorem \ref{equivalence aHt and Ht}. Since, as discussed before,
we think there is value in treating access Hoare logic directly, we
give here a direct proof.


To this end, we prove a slightly stronger proposition.
In the previous section we have defined the strongest pre-conditionfor program $C$ and
post-condition~$Q$, here we define $\SPC{C}{Q}$ as an operator on $Q$ that is indexed by $C$:
\[ \SPC CQ \ :=\ \{s\colon \exists s',\, \executes{C}{s}{s'} \wedge Q(s') \}   \]

\begin{proposition}\label{prop:completeness}
  For any program $C$ and predicate $Q$,
  $\SFHTriple{\SPC CQ}{C}{Q}$ is provable in \ahl.
\end{proposition}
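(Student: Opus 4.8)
The plan is to prove the stronger statement by structural induction on the program $C$: in each case I would compute $\SPC{C}{Q}$ directly from the operational semantics $\longmapsto$ and then match it against the corresponding rule of \ahl, using the consequence rule to absorb the side implications — all of which turn out to be logically valid by a mere unfolding of the execution relation.

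For the \emph{base cases}: when $C = \ccskip$, $\executes{\ccskip}{s}{s'}$ holds iff $s = s'$, so $\SPC{\ccskip}{Q}$ is logically equivalent to $Q$; the \ccskip axiom gives $\SFHTriple{Q}{\ccskip}{Q}$ and the consequence rule turns this into $\SFHTriple{\SPC{\ccskip}{Q}}{\ccskip}{Q}$. When $C$ is an assignment $V:=E$, $\executes{V:=E}{s}{s'}$ holds iff $s'$ is $s$ with $V$ updated to the value of $E$, so $\SPC{V:=E}{Q}$ is exactly $Q[E/V]$, and the assignment axiom already reads $\SFHTriple{Q[E/V]}{V:=E}{Q}$. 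For the \emph{composition} case, determinism of execution gives that $\executes{S;T}{s}{s'}$ holds iff $\executes{S}{s}{s''}$ and $\executes{T}{s''}{s'}$ for some $s''$, whence $\SPC{S;T}{Q}$ coincides with $\SPC{S}{\SPC{T}{Q}}$; applying the induction hypothesis first to $T$ and then to $S$, the composition rule yields $\SFHTriple{\SPC{S}{\SPC{T}{Q}}}{S;T}{Q}$, which is the required triple. For the \emph{conditional} case, $\SPC{\ccifte{B}{S}{T}}{Q}(s)$ holds iff $(B(s) \wedge \SPC{S}{Q}(s)) \vee (\neg B(s) \wedge \SPC{T}{Q}(s))$; writing $P$ for this predicate, one checks $\SPC{S}{Q} \limp (B \limp P)$ and $\SPC{T}{Q} \limp (\neg B \limp P)$, so the consequence rule turns the induction hypotheses $\SFHTriple{\SPC{S}{Q}}{S}{Q}$ and $\SFHTriple{\SPC{T}{Q}}{T}{Q}$ into the two premises of the conditional rule, which then delivers $\SFHTriple{P}{\ccifte{B}{S}{T}}{Q}$.

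The \emph{while} case is the one I expect to be the main obstacle. The idea is to use as loop invariant the predicate $P := \SPC{\ccwhiledo{B}{S}}{Q}$ itself, unfolding the semantics of the loop as follows: a run of $\ccwhiledo{B}{S}$ from $s$ to $s'$ is a finite chain $s = s_0, \dots, s_k = s'$ with $B(s_i)$ and $\executes{S}{s_i}{s_{i+1}}$ for all $i < k$ and $\neg B(s_k)$. Then I would verify two implications. First, $B \wedge \SPC{S}{P} \limp P$: if $B(s)$ holds and $\executes{S}{s}{s''}$ with a terminating $Q$-run of the loop from $s''$, then prepending this body step gives a terminating $Q$-run of the loop from $s$; hence the consequence rule turns the induction hypothesis $\SFHTriple{\SPC{S}{P}}{S}{P}$ into the premise $\SFHTriple{B \limp P}{S}{P}$ of the while rule, whose conclusion is $\SFHTriple{P}{\ccwhiledo{B}{S}}{\neg B \limp P}$. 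Second, $Q \limp (\neg B \limp P)$: if $\neg B(s)$ then the loop halts at once in $s$, so $Q(s)$ already witnesses $P(s)$; hence the consequence rule strengthens the postcondition $\neg B \limp P$ to $Q$, yielding $\SFHTriple{\SPC{\ccwhiledo{B}{S}}{Q}}{\ccwhiledo{B}{S}}{Q}$.

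The delicate point in the loop case is to recognize that the deliberate weakening of the postcondition to $\neg B \limp P$ in the while rule is exactly what makes the second implication available: with the undiluted postcondition $P$ the argument would not close, since $P(s)$ need not imply $Q(s)$ when $B(s)$ holds. Once the invariant $\SPC{\ccwhiledo{B}{S}}{Q}$ is identified, every remaining implication demanded by the consequence rule is a routine unfolding of $\longmapsto$, so no further subtlety is needed, and the induction goes through.
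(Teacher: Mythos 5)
Your proposal is correct and follows essentially the same route as the paper's own proof: structural induction on $C$, the characterizations $\SPC{V:=E}{Q}=Q[E/V]$ and $\SPCo{S;T}=\SPCo{S}\circ\SPCo{T}$, the two branch implications for the conditional, and for the loop the invariant $P=\SPC{\ccwhiledo{B}{S}}{Q}$ together with the prepending argument for $\SPC{S}{P}\limp(B\limp P)$ and the closing implication $Q\limp(\neg B\limp P)$. The only cosmetic quibble is that the decomposition of $\executes{S;T}{s}{s'}$ is just the semantics of sequencing and does not need determinism.
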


This proposition implies completeness:
Assume $\SFHTriple{P}{C}{Q}$ is true.
As $\SPC CQ$ is the strongest pre-condition for $C$ and $Q$, we obtain
\[ \SPC CQ\subseteq P  \]
The previous proposition shows that $\SFHTriple{\SPC CQ}{C}{Q}$ is provable in \ahl.
Thus, using the consequence rule, we obtain that $\SFHTriple{P}{C}{Q}$ is
provable in \ahl.

\begin{proof}[Proof of Proposition~\ref{prop:completeness}]
    We prove the assertion by structural induction on~$C$.
    For the empty statement \ccskip we compute that $\SPCo{\ccskip}$ is the
    identity function on predicates.
    Hence, the assertion follows immediately from the axiom for the empty
    statement.

    In the case of assignment $V:=E$, we compute
    \begin{align*}
      s\in\SPC{V:=E}{Q}\ \  
      &\Leftrightarrow\ \ \exists s',\, \executes{V:=E}{s}{s'} \wedge Q(s') \\
      &\Leftrightarrow\ \  s\in Q[E/V]
    \end{align*}
    Hence, $\SPC{V:=E}Q \ =\ Q[E/V]$ and the assertion again follows immediately from the corresponding axiom.

    In case of sequential composition $S;T$, we can form the following \ahl
    derivation by applying the rule of composition to the induction
    hypotheses for $S$ and $T$:
    \begin{prooftree}
    \AxiomC{i.h.}
    \noLine
    \UnaryInfC{\SFHTriple{\SPC S{\SPC TQ}}{S}{\SPC TQ}}
    \AxiomC{i.h.}
    \noLine
    \UnaryInfC{\SFHTriple{\SPC TQ}{T}{Q}}
    \BinaryInfC{\SFHTriple{\SPC S{\SPC TQ}}{S;T}{Q}}
    \end{prooftree}
    An easy calculation shows
    $\SPCo{S;T} = \SPCo{S} \circ \SPCo{T}$,
    hence the assertion follows.
\renewcommand{\labelenumi}{(\arabic{enumi})}
\renewcommand{\theenumi}{(\arabic{enumi})}

    In the case of the conditional statement $C = \ccifte{B}{S}{T}$,
    let $P=\SPC CQ$.  We claim
    \begin{align}
    \label{eq1}
      \SPC SQ &\limp (B\limp P) \\
    \label{eq2}
      \SPC TQ &\limp (\neg B\limp P)
    \end{align}

    For \eqref{eq1}, assume $s\in\SPC SQ$ and $B(s)$.
    By definition, there exists $s'$ such that $\executes{S}{s}{s'} \wedge
    Q(s')$.
    As $B(s)$ holds, we obtain $\executes{C}{s}{s'} \wedge Q(s')$, thus
    $s\in\SPC CQ=P$.
    Similar for \eqref{eq2}. 

    Using \eqref{eq1} and \eqref{eq2} and the induction hypotheses for $S$
    and $T$,
    we can form the following derivation:
    \begin{prooftree}
    \AxiomC{\eqref{eq1}}
    \noLine
    \UnaryInfC{$\SPC SQ \limp (B\limp P)$}
    \AxiomC{i.h.}
    \noLine
    \UnaryInfC{\SFHTriple{\SPC SQ}{S}{Q}}
    \BinaryInfC{\SFHTriple{B\limp P}{S}{Q}}
    \AxiomC{\eqref{eq2}}
    \noLine
    \UnaryInfC{$\SPC TQ \limp (\neg B\limp P)$}
    \AxiomC{i.h.}
    \noLine
    \UnaryInfC{\SFHTriple{\SPC TQ}{T}{Q}}
    \BinaryInfC{\SFHTriple{\neg B\limp P}{T}{Q}}
    \BinaryInfC{\SFHTriple{P}{\ccifte{B}{S}{T}}{Q}}
    \end{prooftree}

    For the while statement $C = \ccwhiledo{B}{S}$, let $P=\SPC CQ$.
    We claim  
    \begin{equation}\label{eq while}
        \SPC SP\limp (B\limp P)    
    \end{equation}
    
    To prove \eqref{eq while}, assume that $s\in\SPC SP$ and $B(s)$ hold.
    We need to show that $P(s)$ holds.
    From $s\in\SPC SP$ we obtain 
    $\executes{S}{s}{s'}$ and $P(s')$ for some $s'$.   
    $P(s')$ implies $\executes{C}{s'}{s''}$ and $Q(s'')$
    for some~$s''$.
    The first part, $\executes{C}{s'}{s''}$, implies that
    there are $s_1,\dots,s_k$ such that $s_1=s'$, $s_k=s''$, and
    \begin{equation}\label{eq3}
        \executes{S}{s_i}{s_{i+1}}\wedge B(s_i)
    \end{equation}
    for $i=1,\dots,k{-}1$.
    Let $s_0=s$, then \eqref{eq3} also holds for $i=0$.
    Hence $\executes{C}{s}{s''}$, which, together with $Q(s'')$, shows $P(s)$.
    Thus, using the induction hypothesis, we can derive
    \begin{prooftree}
    \AxiomC{\eqref{eq while}}
    \noLine
    \UnaryInfC{$\SPC SP\limp (B\limp P)$}
    \AxiomC{i.h.}
    \noLine
    \UnaryInfC{\SFHTriple{\SPC SP}{S}{P}}
    \BinaryInfC{\SFHTriple{B\limp P}{S}{P}}
    \UnaryInfC{\SFHTriple{P}{\ccwhiledo{B}{S}}{\neg B\limp P}}
    \AxiomC{$\star$}
    \noLine
    \UnaryInfC{$Q\limp (\neg B\limp P)$}
    \BinaryInfC{\SFHTriple{P}{\ccwhiledo{B}{S}}{Q}}
    \end{prooftree}
    The final step is to show $\star$: Assuming $Q(s')$ and $\neg B(s')$, we
    need to show that $P(s')$ holds.
    We observe that $\neg B(s')$ implies $\executes{C}{s'}{s'}$, which
    together with $Q(s')$
    shows~$P(s')$.
\end{proof}

We finish this section by drawing two interesting consequences
that the following rules for conjunction and disjunction are admissible in
access Hoare logic:

\medskip

\renewcommand{\labelenumi}{(\alph{enumi})}
\renewcommand{\theenumi}{(\alph{enumi})}

\textbf{Conjunction rule}

\begin{prooftree}
    \AxiomC{\SFHTriple{P_1}{C}{Q_1}}
    \AxiomC{\SFHTriple{P_2}{C}{Q_2}}
    \BinaryInfC{\SFHTriple{P_1 \land P_2}{C}{Q_1 \land Q_2}}
\end{prooftree}

\textbf{Disjunction rule}

\begin{prooftree}
    \AxiomC{\SFHTriple{P_1}{C}{Q_1}}
    \AxiomC{\SFHTriple{P_2}{C}{Q_2}}
    \BinaryInfC{\SFHTriple{P_1 \lor P_2}{C}{Q_1 \lor Q_2}}
\end{prooftree}

\medskip

Similar rules are known to be admissible for standard Hoare
logic~\cite{gordon:LectureSlides:HoareLogic}.
The soundness of both rules for access Hoare logic follows immediately.
Their admissibility is also true: If the premises for a rule are provable in
\ahl, then they are true by soundness.
But then the conclusion is also true by the soundness of the rule, which is therefore
provable in \ahl by completeness.

\section{Examples formalized in access Hoare logic}\label{sec:examplesFormal}

We revisit the examples from Section~\ref{sec:examples} in the context of our developed theory for access Hoare Logic.

\subsection{Access Security for Electronic Keys revisited}\label{sec:exampleHKformal}

\newcommand{\DK}{\deviceKey}
\newcommand{\CKo}{\cardKeyOne}
\newcommand{\CKt}{\cardKeyTwo}
\newcommand{\AC}{\access}
\newcommand{\code}[1]{\ensuremath{\mathtt{#1}}}

\newcommand{\Cod}{\mathrm{C1d}}
\newcommand{\Ctd}{\mathrm{C2d}}
\newcommand{\Acc}{\mathrm{Acc}}
\newcommand{\hPre}{\mathrm{Pre}}
\newcommand{\hPost}{\mathrm{Post}}

For our digital key system for hotel doors from Section~\ref{ex:HotelKeys}, we considered two readings of program~\verb|P0|, one being access secure (\verb|P1|) and the other not (\verb|P2|).
We now give a formal proof in \ahl{} to show that
\verb|P1| is access secure.  We also show that \verb|P2| is not access
secure by showing that the accress Hoare triple for \verb|P1| is not valid.

We first describe a derivation of 
\begin{equation}\label{atriple:hotel key P1}
 \SFHTriple{\hPre}{\ccifte{\code{not (dk==ck1)}}{\code{acc := (dk==ck2)}}{\code{(dk := ck2;acc := true)}}}{\hPost}
\end{equation}
in \ahl{}, showing that \verb|P1| is access secure wrt.~$\hPre$ and $\hPost$.
To make the formal proof more readable, we abbreviate some formulas:
\begin{align*}
    \Cod &:\quad \CKo = \DK  &
    \Ctd &:\quad \CKt = \DK\\
    \hPre &:\quad \Cod\lor\Ctd  &
    \hPost &:\quad \AC = \cctrue
\end{align*}

Using the assignment and consequence rules, we obtain
\begin{prooftree}
    \AxiomC{}
    \UnaryInfC{\SFHTriple{\hPost[\code{(dk==ck2)}/\AC]}{\code{acc := (dk==ck2)}}{\hPost}}
    \UnaryInfC{\SFHTriple{\Ctd}{\code{acc := (dk==ck2)}}{\hPost}}
\end{prooftree}
because $\hPost[\code{(dk==ck2)}/\AC] \equiv (\code{(dk==ck2)} = \cctrue)$ 
is semantically equivalent to $\Ctd$.

Similarly, using assignment, consequence and composition rules, we obtain 
\begin{prooftree}
    \AxiomC{}
    \UnaryInfC{\SFHTriple{\cctrue[\CKt/\DK]}{\code{dk := ck2}}{\cctrue}}
    \AxiomC{}
    \UnaryInfC{\SFHTriple{\hPost[\code{true}/\AC]}{\code{acc := true}}{\hPost}}
    \UnaryInfC{\SFHTriple{\cctrue}{\code{acc := true}}{\hPost}}
    \BinaryInfC{\SFHTriple{\cctrue}{\code{dk := ck2;acc := true}}{\hPost}}    
\end{prooftree}
because $\cctrue[\ldots] \equiv \cctrue$ and
$\hPost[\code{true}/\AC]\equiv (\cctrue = \cctrue)$ is equivalent to $\cctrue$.

Using consequence and conditional rules we finally obtain
\begin{prooftree}
    \AxiomC{\SFHTriple{\neg\Cod\limp\hPre}{\code{acc := (dk==ck2)}}{\hPost}}
    \AxiomC{\SFHTriple{\neg\neg\Cod\limp\hPre}{\code{dk := ck2;acc := true}}{\hPost}}
    \BinaryInfC{\SFHTriple{\hPre}{\ccifte{\code{not (dk==ck1)}}{\code{acc := (dk==ck2)}}{\code{(dk := ck2;acc := true)}}}{\hPost}}
\end{prooftree}
because $\Ctd$ semantically implies $\neg\Cod\limp\hPre$, 
and $\neg\neg\Cod\limp\hPre$ is equivalent to true.
This concludes the derivation of \eqref{atriple:hotel key P1} in \ahl.

\bigskip

We now show that \verb|P2| is not access secure wrt.~$\hPre$ and $\hPost$.
To this end we demonstrate that
\begin{equation}\label{atriple:hotel key P2}
 \SFHTriple{\hPre}{(\ccifte{\code{not (dk==ck1)}}{\code{acc := (dk==ck2)}}{\code{dk := ck2}})\code{;acc := true}}{\hPost}
\end{equation} 
is not valid by giving a counter model.
Using Soundness~\ref{thm:soundness} this also shows that \eqref{atriple:hotel key P2} is not provable in \ahl.

To define the counter model, let $a,b$ be different keys.  Define states $s, s'$ as follows:
\begin{align*}
    s&:\quad \CKo=\CKt=a,\ \DK=b,\ \AC=\ccfalse \\
    s'&:\quad \CKo=\CKt=a,\ \DK=b,\ \AC=\cctrue
\end{align*}
We compute that $\executes{\code{P2}}{s}{s'}$,
that $\hPre$ is false in $s$,
and that $\hPost$ is true in $s'$.
Hence $\SFHTriple{\hPre}{\code{P2}}{\hPost}$ is not valid.

\subsection{Access Security for Bitcoin revisited}\label{sec:exampleBitcoinformal}

\newcommand{\msg}{\codecommand{msg}}
\newcommand{\stack}{\codecommand{stack}}
\newcommand{\stackp}{\codecommand{stack'}}
\newcommand{\AgdaOpPush}{\mathtt{opPush}}
\newcommand{\trlp}[1]{\langle\!\langle #1 \rangle\!\rangle}
\newcommand{\trl}[1]{[\![ #1 ]\!]}
\newcommand{\opc}{\mathrm{opc}}
\newcommand{\sttop}{\code{top}}
\newcommand{\stpush}{\code{push}}
\newcommand{\stpop}{\code{pop}}
\newcommand{\stempty}{\code{empty}}

In this section we expand on the Bitcoin script example from
Section~\ref{ex:Bitcoin} by formally demonstrating that
$\sPK_\pKH$ is access secure.
To this end, we briefly introduce
the part of Bitcoin script relevant to  $\sPK_\pKH$,
a translation of that part to the \ccwhile language used in this paper,
and an explanation why the access Hoare tripe for the translated script is valid.

\subsubsection{Introduction to Bitcoin Script}%
\label{subsec:Bitcoin intro}

Bitcoin script is a stack-based scripting language~\cite{antonopoulos2023mastering},
given by a set of commands called \emph{opcodes}.
In this paper we focus on opcodes used in $\sPK_\pKH$,
a full list of opcodes can be found in~\cite{bitcoin:wiki}.

A number of opcodes are used to push data onto the stack;
we ignore details and write \code{<number>} for the sequence of
instructions that will push \code{number} onto the stack.
Values on the stack are also interpreted as truth values, 
in which case any value ${>}0$ will be interpreted as true, 
and any other value as false.

In addition to pushing data onto the stack, the following opcodes
are relevant to this paper: 
\begin{itemize}
\item \code{OP\_DUP} duplicates the top element of the stack.
\item \code{OP\_HASH} takes the top item of the stack and replaces it with its hash.
\item \code{OP\_EQUAL} removes the top two elements of the stack,
  checks whether they are equal, and pushes the result back onto the stack.
\item \code{OP\_VERIFY} invalidates the transaction if the top stack value is false. The top item on the stack will be removed.
\item \code{OP\_CHECKSIG} hashes the entire transaction, and checks whether the top two items on the stack form a correct pair of a signature and a public key for this hash.
\end{itemize}

Based on the introduces opcodes, the locking and unlocking
scripts of the pay-to-public-key-hash (P2PKH) script are given as follows:\\[.5ex]
\begin{tabular}{@{}ll}
\sPKpKH: & \verb|OP_DUP OP_HASH160 <pubKeyHash> OP_EQUAL OP_VERIFY OP_CHECKSIG| \\
\sSpKH:   & \verb|<sig> <pubKey>|
\end{tabular}\\[.5ex]
The unlocking script \sSpKH pushes
a signature \code{sig} and a public key \code{pubKey} onto the stack.
The locking script \sPKpKH checks whether \code{pubKey} 
hashes to the recorded \code{pubKeyHash}, and whether \code{sig} is a signature for the message signed by \code{pubKey}.

\subsubsection{Translation of Bitcoin script into a \ccwhile program}%
\label{subsec:Bitcoin Translation}

\begin{figure}[ht]
\begin{align*}
    \trlp{\code{<number>}} \quad=\quad & \code{stack\ :=\ push(number,stack);}\\
    \trlp{\code{OP\_DUP}} \quad=\quad & \code{fail\ :=\ \stempty(stack);\ \ x\ :=\ top(stack);}\\
     & \code{stack\ :=\ push(x,stack);}\\
    \trlp{\code{OP\_HASH160}} \quad=\quad & \code{fail\ :=\ \stempty(stack);\ \ x\ :=\ top(stack);}\\
     & \code{stack\ :=\ pop(stack);\ \ y\ :=\ hash160(x);}\\
     & \code{stack\ :=\ push(y,stack);}\\
    \trlp{\code{OP\_EQUAL}} \quad=\quad & \code{fail\ :=\ \stempty(stack);\ \ x\ :=\ top(stack);}\\
     & \code{stack\ :=\ pop(stack);\ \ fail\ :=\ \stempty(stack);}\\
     & \code{y\ :=\ top(stack);\ \ stack\ :=\ pop(stack);}\\
     & \code{b\ :=\ x == y;\ \ stack\ :=\ push(b,stack);}\\
    \trlp{\code{OP\_VERIFY}} \quad=\quad & \code{fail\ :=\ \stempty(stack);\ \ x\ :=\ top(stack);}\\
     & \code{stack\ :=\ pop(stack);}\\
     & \code{if\ x=0\ then\ fail\,{:=}\,true\ else\ skip;}\\
    \trlp{\code{OP\_CHECKSIG}} \quad=\quad 
     & \code{fail\ :=\ \stempty(stack);\ \ pbk\ :=\ top(stack);}\\
     & \code{stack\ :=\ pop(stack);\ \ fail\ :=\ \stempty(stack);}\\
     & \code{sig\ :=\ top(stack);\ \ stack\ :=\ pop(stack);}\\
     & \code{b\ :=\ correctsig(sig,msg,pbk);}\\
     & \code{stack\ :=\ push(b,stack);}
\end{align*}
\caption{The translation of opcodes to while programs.}\label{fig:transl}
\end{figure}

We first describe how to systematically translate Bitcoin script into the \ccwhile language.
We view scripts $P$ are lists of opcodes, and define the pre-translation $\trlp{P}$ by recursion: 
The empty script is the empty \ccwhile program, and $\trlp{\opc\ P} = \trlp{\opc}\code{;}\trlp{P}$
where $\trlp{\opc}$ is defined in Fig.~\ref{fig:transl}.  The full translation of a script $P$ then takes the possibility of $P$ to fail in the following way into account: Let \code{fail} be a fresh variable, and define
\begin{align*}
  \trl{P} \quad=\quad & \code{fail\ :=\ false;} \\
   & \trlp{P} \\
   & \code{if\ fail\, {==}\, true\ then\ push(0,stack)\ else\ skip}
\end{align*}

We now explain the translation of opcodes listed above and given in Fig.~\ref{fig:transl}.
We assume that the \ccwhile program has access to a variable \msg that stores the hash of the entire transaction.
Scripts are dealing with one data structure, the stack.  
In the \ccwhile language we represent this as a variable \code{stack} the stores a list, and the following three operations:
\code{\sttop(stack)},
\code{\stpush(v,stack)}, 
\code{\stpop(stack)}, which return: the first element of \stack; \stack with \code{v} added as the first element; and the tail of \stack, respectively.
Furthermore, we assume to have a number of operations that perform similar operations as their corresponding opcodes.  We also have \code{\stempty(stack)} that returns \code{true} if the stack is empty.

Overall, we obtain the \ccwhile program $P_{\sPK\pKH}=\trl{\sPK_\pKH}$, which takes as inputs message \msg and a list of integers \stack representing the stack, and outputs \stack as the updated stack.

\subsubsection{Access security of $P_{\sPK\pKH}$}%
\label{subsec:Bitcoin Access Security}

In order to express access security of $P_{\sPK\pKH}$, we start by formalizing the pre- and post-condition.
The post-condition expresses that the stack contains true as its top element:
\[  \hPost :\quad \neg\stempty(\stack) \wedge \sttop(\stack)>0 \]
while the pre-condition expresses that an address and a signature are
on the stack, where the address hashes to a value $pkh$ stored in the locking
script \sPKpKH, and the signature matches the address: 
\begin{align*}
    \hPre :\quad \exists\, pk,sig\, [\,&\stack = \stpush(pk,\stpush(sig,\stackp)) \\
    &  \land\ \code{hash160}(pk) = pkh  \\
    &  \land\ \code{correctsig}(sig,\msg,pk) >0 \,]
\end{align*}

By analyzing the translated script, we observe that 
\begin{equation}\label{atriple:bitcoin formal}
 \SFHTriple{\hPre}{P_{\sPK\pKH}}{\hPost}
\end{equation} 
is valid, that is, 
given any values for \msg, \stack, \stackp, such that $\hPost(\stackp)$ holds (i.e., the top element in \stackp is $>0$),
then we must have that $\hPre(\msg,\stack)$ holds.
Using Completeness~\ref{thm:completeness}, we therefore obtain that \eqref{atriple:bitcoin formal} is provable in \ahl.

\subsection{Access Security for the \texttt{CheckList} Program}\label{sec:exampleCLformal}

\newcommand{\clPre}{\mathrm{Pre}}
\newcommand{\clPost}{\mathrm{Post}}
\newcommand{\clInv}{\mathrm{Inv}}
\newcommand{\clInvp}{\mathrm{Inv'}}

We revisit the program in 
Section~\ref{ex:CheckList} and demonstrate that
\verb|CheckList| is access secure, by describing a derivation of the access Hoare triple for \verb|CheckList|.
We define
\begin{align*}
    S :\quad & \ccifte{\code{(el(i,L)==p)}}{\code{(acc := true)}}{\ccskip} \\
    P_1 :\quad & \ccwhiledo{\code{(i<=lh(L))}}{(S\code{;i := i+1})}\\
    \code{CheckList}:\quad & \code{i:=1;\ acc:=false;}\ P_1
\end{align*}
\verb|CheckList| uses inputs \ttp and \ttL to compute output \ttacc.
We also define
\begin{align*}
    \clPre :\quad & \exists j{\le}\lh(\ttL) ( \el(j,\ttL) = \ttp) \\
    \clPost :\quad & \ttacc=\tttrue \\
    \clInv :\quad & \ttacc=\tttrue\ \lor\
                        \exists j{\le}\lh(\ttL) ( j\ge i \land \el(j,\ttL) = \ttp) \\
    \clInvp :\quad & \ttacc=\tttrue\ \lor\
                        \exists j{\le}\lh(\ttL) ( j\ge i{+}1 \land \el(j,\ttL) = \ttp)
\end{align*}
Our task is to prove $\SFHTriple{\clPre}{\code{CheckList}}{\clPost}$ in \ahl.

Using the assignment rule we obtain
\begin{prooftree}
    \AxiomC{}
    \UnaryInfC{\SFHTriple{\clInv[\ttacc/\ttfalse, \tti/1]}{\code{i:=1}}{\clInv[\ttacc/\ttfalse]}}
    \AxiomC{}
    \UnaryInfC{\SFHTriple{\clInv[\ttacc/\ttfalse]}{\code{acc:=false}}{\clInv}}
    \BinaryInfC{\SFHTriple{\clInv[\ttacc/\ttfalse, \tti/1]}{\code{i:=1;acc:=false}}{\clInv}}
\end{prooftree}
As $\clInv[\ttacc/\ttfalse, \tti/1]\equiv \ttfalse=\tttrue\ \lor\
                        \exists j{\le}\lh(\ttL) ( j\ge 1 \land \el(j,\ttL) = \ttp)$ 
is semantically equivalent to $\clPre$, we can apply the Consequence Rule to obtain
\begin{equation}\label{atriple:CL1}
    \SFHTriple{\clPre}{\code{i:=1;acc:=false}}{\clInv}
\end{equation} 

The main part will be to derive
\begin{equation}\label{atriple:CL2}
    \SFHTriple{\clInv}{P_1}{\clPost}
\end{equation} 
in \ahl, then we can finish the overall derivation with an application of composition:
\begin{prooftree}
    \AxiomC{\eqref{atriple:CL1}}
    \noLine
    \UnaryInfC{\SFHTriple{\clPre}{\code{i:=1;acc:=false}}{\clInv}}
    \AxiomC{\eqref{atriple:CL2}}
    \noLine
    \UnaryInfC{\SFHTriple{\clInv}{P_1}{\clPost}}
    \BinaryInfC{\SFHTriple{\clPre}{\code{CheckList}}{\clPost}}
\end{prooftree}

For \eqref{atriple:CL2}, we first derive 
\begin{equation}\label{atriple:CL3}
    \SFHTriple{(i\le\lh(\ttL)\to\clInv)}{S}{\clInvp}
\end{equation}
We observe that $\clInvp[\ttacc/\tttrue]$ is valid,
and the same holds for
\[
  \el(i,\ttL) = \ttp \to (i\le\lh(\ttL) \to
    \clInv
  )
\]
Thus, we obtain using assignment and consequence
\begin{prooftree}
    \AxiomC{}
    \UnaryInfC{\SFHTriple{\clInvp[\ttacc/\tttrue]}{\code{acc:=true}}{\clInvp}}
    \UnaryInfC{\SFHTriple{\el(i,\ttL) = \ttp \to (i\le\lh(\ttL) \to \clInv)  }{\code{acc:=true}}{\clInvp}}
\end{prooftree}
Similar, as $\clInvp\to\clInv$ is valid, we obtain using the axiom for empty statement and consequence
\begin{prooftree}
    \AxiomC{}
    \UnaryInfC{\SFHTriple{\clInvp}{\ccskip}{\clInvp}}
    \UnaryInfC{\SFHTriple{\neg \el(i,\ttL) = \ttp \to (i\le\lh(\ttL) \to \clInv)  }{\ccskip}{\clInvp}}
\end{prooftree}
Hence, an application of the conditional rule yields \eqref{atriple:CL3}.  

Now we can continue as follows:
\begin{prooftree}
    \AxiomC{\eqref{atriple:CL3}}
    \noLine
    \UnaryInfC{\SFHTriple{(i\le\lh(\ttL)\to\clInv)}{S}{\clInvp}}
    \AxiomC{}
    \UnaryInfC{\SFHTriple{\clInvp}{\code{i:=i+1}}{\clInv}}
    \BinaryInfC{\SFHTriple{(i\le\lh(\ttL)\to\clInv)}{S;\code{i:=i+1}}{\clInv}}
    \UnaryInfC{\SFHTriple{\clInv}{P_1}{(\neg(i\le\lh(\ttL))\to\clInv)}}
\end{prooftree}
where the last rule is an application of while.
To conclude, we observe that $\clPost\to\clInv$ is valid, hence also
$\clPost\to(\neg(i\le\lh(\ttL))\to\clInv)$.
Therefore, an application of consequence yields \eqref{atriple:CL2}.

This finishes the formal derivation of
$\SFHTriple{\clPre}{\code{CheckList}}{\clPost}$
in \ahl.
\section{Comparison to other types of Hoare Logic}\label{sec:comparisons}

\newcommand{\post}{\mathrm{post}}
\newcommand{\pre}{\mathrm{pre}}
\newcommand{\SOP}{\mathrm{StrongestOverPost}}
\newcommand{\WUP}{\mathrm{WeakestUnderPost}}

We compare our approach to other types of reverse Hoare logic.
Another form of reversing Hoare logic has been proposed in the literature
\cite{vriesKoutavas:ReverseHoareLogic} and further developed to deal with
\emph{incorrectness} of programs \cite{OHearn:IncorrectnessLogic}.
\cite{OHearn:IncorrectnessLogic} defines under-approximate triples
$\IncTriple pcq$ that expresses that for each final state $s'\in q$
there is a state $s\in p$ such that program $c$, when started in state $s$, 
can terminate in state $s'$. 
This is a different perspective, which is incompatible to access Hoare triples.
In fact, as pointed out in \cite[Fact 9]{OHearn:IncorrectnessLogic}, under-approximations are not well-behaved in the context of backwards transformers.

\begin{figure}[ht]
\begin{minipage}{0.48\textwidth}
    \centering
\begin{tikzpicture}[>=latex,scale=.8]
\tikzstyle{pred}=[blue, font=\large\itshape]
\node[pred] (L) at (0,0)   {Predicates};
\node[pred] (R1) at (5,2)  {Predicates};
\node[pred] (R2) at (5,0)  {Predicates};
\node[pred] (R3) at (5,-2) {Predicates};
\draw[->, thick] (L) -- (R1)
  node[midway, above=1.5ex] {$\HoareTriple{-}c{-}$};
\draw[->, thick] (L) -- (R2) 
  node[midway, above] {$\post(c)$};
\draw[->, thick] (L) -- (R3)
  node[midway, below=1.5ex] {$\IncTriple{-}c{-}$};
\draw[->, thick] (R2) -- (R1)
  node[midway, right=0.7ex] {\huge \rotatebox[origin=c]{90}{$\subseteq$}};
\draw[->, thick] (R2) -- (R3)
  node[midway, right=0.7ex] {\huge \rotatebox[origin=c]{90}{$\subseteq$}};
\end{tikzpicture}
    \caption{Correctness and incorrectness triples in forward direction}\label{fig:forwards diag}
\end{minipage}
\hfill
\begin{minipage}{0.48\textwidth}
    \centering
\begin{tikzpicture}[>=latex,scale=.8]
\tikzstyle{pred}=[blue, font=\large\itshape]
\node[pred] (L) at (10,0)   {Predicates};
\node[pred] (R1) at (5,2)  {Predicates};
\node[pred] (R2) at (5,0)  {Predicates};
\node[pred] (R3) at (5,-2) {Predicates};
\draw[->, thick] (L) -- (R1)
  node[midway, above=1.5ex] {$\SFHTriple{-}c{-}^{-1}$};
\draw[->, thick] (L) -- (R2) 
  node[midway, above] {$\pre(c)$};
\draw[->, thick] (L) -- (R3)
  node[midway, below=1.5ex] {$\HoareTriple{-}c{-}^{-1}$};
\draw[->, thick] (R2) -- (R1)
  node[midway, right=0.7ex] {\huge \rotatebox[origin=c]{90}{$\subseteq$}};
\draw[->, thick] (R2) -- (R3)
  node[midway, right=0.7ex] {\huge \rotatebox[origin=c]{90}{$\subseteq$}};
\end{tikzpicture}
    \caption{Access security and correctness triples in backwards direction}\label{fig:backwards diag}
\end{minipage}
\end{figure}

To expand the comparison in more detail, we repeat the high-level overview given in \cite{OHearn:IncorrectnessLogic}, and contrast it with a comparable high-level overview for access Hoare triples.
Following \cite{OHearn:IncorrectnessLogic}, we consider
$\IncTriple -c-$ and $\HoareTriple -c-$ as relations on predicates,
and $\post(c)$ as a function on predicates, 
mapping each input predicate to the set of states reached upon termination.
Their relation is displayed in Fig.~\ref{fig:forwards diag}.
The picture represents a commuting diagram in the category of sets and binary relations, where 
$\IncTriple -c- = \post(c);\supseteq$ and 
$\HoareTriple -c- = \post(c);\subseteq$. 
Here, ‘;’ refers to the sequential composition of relations. 
The diagram  can be used to characterizes $\IncTriple -c-$ and $\HoareTriple -c-$
using $\post(c)$.
Also, a connection between $\post$, and strongest and weakest post-condition has been stated in \cite{OHearn:IncorrectnessLogic} in the following form: Defining
\begin{itemize}
    \item $\SOP(c)p \ =\  \bigwedge\{q|\HoareTriple pcq \text{ holds} \}$
    \item $\WUP(c)p \ =\  \bigvee\{q|\IncTriple pcq \text{ holds} \}$
\end{itemize}
we have
\[ \SOP(c) \quad=\quad \WUP(c) \quad=\quad \post(c)\]

Access Hoare triples offer a similar duality for the backwards perspective, see Fig.~\ref{fig:backwards diag}.
As already explained in \cite[Fact 9]{OHearn:IncorrectnessLogic}, this perspective is incomparable with under-approximation triples.
In Fig.~\ref{fig:backwards diag}, we consider
$\SFHTriple -c-^{-1}$ and $\HoareTriple -c-^{-1}$ as relations on predicates (from post- to pre-conditions),
and $\pre(c)$ as a function on predicates, 
mapping each input predicate $q$ to the set of states that can reach~$q$.
In the case of deterministic and terminating $c$, we have that 
the picture in Fig.~\ref{fig:backwards diag} represents again a commuting diagram, where 
$\SFHTriple -c-^{-1} = \pre(c);\subseteq$ and 
$\HoareTriple -c-^{-1} = \pre(c);\supseteq$. 
The diagram can again be used to characterizes $\SFHTriple -c-$ and $\HoareTriple -c-$
using $\pre(c)$.
Also, a connection between $\pre$, and strongest and weakest post-condition has been stated 
in Corollary~\ref{corollaryPwEqualPsIfTerminating}:
The weakest pre-conditions of $c$ for Hoare logic $P^w(c)$ coincides with the strongest pre-conditions of $c$ for access Hoare logic $P^s(c)$; 
moreover, they are equal to $\pre(c)$:
\[  P^w(c) \quad=\quad P^s(c) \quad=\quad \pre(c)  \]

Outcome logic \cite{Zilberstein:2025} provides a unified algebraic framework for reasoning about the diverse execution results of non-deterministic and buggy programs. It unifies Hoare and incorrectness logic by treating various termination modes---such as normal termination, assertion failure, and divergence---as distinct, composable outcomes. Like incompleteness logic, the focus is on the forward direction, and hence not directly comparable to access Hoare logic.

\section{Conclusion}

In this note, we made the point that
access security
cannot be addressed naturally by standard Hoare logic.
To overcome this issue, we introduced access Hoare logic
where the reasoning is in the reverse direction compared to Hoare logic,
namely from post-conditions to pre-conditions.
We gave three examples to demonstrate our point,
one regarding electronic keys,
a second considering the cryptocurrency Bitcoin,
and a third about granting access involving a key and a list of keys.
We introduced rules for access Hoare logic, and showed that these rules are
sound and complete.
We provided a link between access Hoare logic and Hoare logic by showing that weakest pre-conditions for Hoare logic coincides with strongest pre-conditions for access Hoare logic for total programs.
We also demonstrated a fundamental difference between access Hoare logic and incorrectness logic 
and related approaches, in that access Hoare logic provides a complementary view to standard Hoare logic in the context of backwards transformers, while incorrectness logic does this for forward transformers.

While access Hoare logic can be reduced to Hoare logic by introducing negation,
we argued that this would have several negative consequences,
including breaking verifiability using intuitionistic theorem provers ---
indeed, careful inspection shows that all proofs in this paper
(except for Theorem \ref{equivalence aHt and Ht},
which needs to be replaced by Remark \ref{equivalenceIntuitionistic})
only use intuitionistic logic,
which we have confirmed using Agda \cite{agda:Documentation},
an interactive theorem prover based on intuitionistic type theory.
Details of that formalization will appear in a separate publication.

There are, of course, many areas that we did not touch upon.
Basically, everything that has been investigated for Hoare logic should now
be revisited in the context of access Hoare logic.
Instead of aiming for a complete list, we mention two obvious areas.
In our approach to soundness and completeness, we take assertions to be semantical.
There is a vast literature on ways to make this more precise, by
fixing the syntax of the expressions in the language,
considering formal semantics, their axiomatization and implications on Hoare
logic~\cite{apt:1981}.
As a second area, the relationship to Dijkstra's predicate transformer
semantics and its variations need to be clarified~\cite{dijkstra:1975}.
Finally, detailed case studies are needed to demonstrate the
usefulness of access Hoare logic for
access security.

In connection to this paper, the authors have filed a patent entitled
\emph{Verifying Access Security of a Computer Program} \cite{beckmannSetzer:Patent:AccessHoareLogic:FilingUK}. This patent utilizes
verification condition generation for access Hoare logic, which will be
detailed in a separate publication.


\begin{acks}
We would like to thank Fahad F. Alhabardi.
This article builds upon and expands the use of weakest pre-conditions
for access
security,
which were explored during his PhD research
\cite{alhabardi:PhD:SmartContractsAgda}.
We would also like to thank John V Tucker for valuable comments on an earlier draft of this paper.
\end{acks}

\bibliographystyle{ACM-Reference-Format}
\bibliography{refsAccessHoare}

\end{document}